\newtheorem{definition}{Definition}
\newtheorem{lemma}[definition]{Lemma}
\newtheorem{theorem}[definition]{Theorem}
\def\squareforqed{\hbox{\rlap{$\sqcap$}$\sqcup$}}
\def\qed{\ifmmode\squareforqed\else{\unskip\nobreak\hfil
\penalty50\hskip1em\null\nobreak\hfil\squareforqed
\parfillskip=0pt\finalhyphendemerits=0\endgraf}\fi}
\def\endenv{\ifmmode\;\else{\unskip\nobreak\hfil
\penalty50\hskip1em\null\nobreak\hfil\;
\parfillskip=0pt\finalhyphendemerits=0\endgraf}\fi}
\newenvironment{proof}{\noindent \textbf{{Proof.~} }}{\qed}
\mathchardef\ordinarycolon\mathcode`\:
\def\vcentcolon{\mathrel{\mathop\ordinarycolon}}
\newcommand{\nc}{\newcommand}
\nc{\rnc}{\renewcommand}
\nc{\beq}{\begin{equation}}
\nc{\eeq}{{\end{equation}}}
\nc{\beqa}{\begin{eqnarray}}
\nc{\eeqa}{\end{eqnarray}}
\nc{\lbar}[1]{\overline{#1}}
\nc{\bra}[1]{\langle#1|}
\nc{\ket}[1]{|#1\rangle}
\nc{\ketbra}[2]{|#1\rangle\!\langle#2|}
\nc{\braket}[2]{\langle#1|#2\rangle}
\nc{\proj}[1]{| #1\rangle\!\langle #1 |}
\nc{\avg}[1]{\langle#1\rangle}
\rnc{\max}{\operatorname{max}}
\nc{\Rank}{\operatorname{Rank}}
\nc{\smfrac}[2]{\mbox{$\frac{#1}{#2}$}}
\nc{\tr}{\operatorname{Tr}}
\nc{\ox}{\otimes}
\nc{\dg}{\dagger}
\nc{\dn}{\downarrow}
\nc{\cA}{{\cal A}}
\nc{\cB}{{\cal B}}
\nc{\cC}{{\cal C}}
\nc{\cD}{{\cal D}}
\nc{\cE}{{\cal E}}
\nc{\cF}{{\cal F}}
\nc{\cG}{{\cal G}}
\nc{\cH}{{\cal H}}
\nc{\cI}{{\cal I}}
\nc{\cJ}{{\cal J}}
\nc{\cK}{{\cal K}}
\nc{\cL}{{\cal L}}
\nc{\cM}{{\cal M}}
\nc{\cN}{{\cal N}}
\nc{\cO}{{\cal O}}
\nc{\cP}{{\cal P}}
\nc{\cR}{{\cal R}}
\nc{\cS}{{\cal S}}
\nc{\cT}{{\cal T}}
\nc{\cX}{{\cal X}}
\nc{\cZ}{{\cal Z}}
\nc{\csupp}{{\operatorname{csupp}}}
\nc{\qsupp}{{\operatorname{qsupp}}}
\nc{\var}{\operatorname{var}}
\nc{\rar}{\rightarrow}
\nc{\lrar}{\longrightarrow}
\nc{\polylog}{\operatorname{polylog}}
\nc{\sign}{{\operatorname{sign}}}
\def\a{\alpha}
\def\b{\beta}
\def\g{\gamma}
\def\d{\delta}
\def\ve{\varepsilon}
\nc{\RR}{{{\mathbb R}}}
\nc{\CC}{{{\mathbb C}}}
\nc{\FF}{{{\mathbb F}}}
\nc{\NN}{{{\mathbb N}}}
\nc{\ZZ}{{{\mathbb Z}}}
\nc{\PP}{{{\mathbb P}}}
\nc{\QQ}{{{\mathbb Q}}}
\nc{\UU}{{{\mathbb U}}}
\nc{\EE}{{{\mathbb E}}}
\nc{\id}{{\operatorname{id}}}
\nc{\be}{\begin{equation}}
\nc{\ee}{\end{equation}}
\nc{\bea}{\begin{eqnarray}}
\nc{\eea}{\end{eqnarray}}
\nc{\Hom}[2]{\mbox{Hom}(\CC^{#1},\CC^{#2})}
\nc{\rU}{\mbox{U}}
\nc{\ob}[1]{#1}
\nc{\eq}[1]{(\ref{eq:#1})}
\def\ot{\otimes}
\begin{document}

\title{Counterexamples to additivity of minimum output p-R\'{e}nyi
       entropy for p close to 0}

\author{Toby Cubitt}
\affiliation{Department of Mathematics, University of Bristol, Bristol BS8 1TW, U.K.}

\author{Aram W. Harrow}
\affiliation{Department of Computer Science, University of Bristol, Bristol BS8 1UB, U.K.}

\author{Debbie Leung}
\affiliation{Institute for Quantum Computing, University of Waterloo,
             Waterloo N2L 3G1, Ontario, Canada}

\author{Ashley Montanaro}
\affiliation{Department of Computer Science, University of Bristol, Bristol BS8 1UB, U.K.}

\author{Andreas Winter}
\affiliation{Department of Mathematics, University of Bristol, Bristol BS8 1TW, U.K.}
\affiliation{Centre for Quantum Technologies, National University of Singapore,
 2 Science Drive 3, Singapore 117542}
\email{a.j.winter@bris.ac.uk}

\date{22nd January 2008}

\begin{abstract}
  Complementing recent progress on the additivity conjecture of
  quantum information theory, showing that the minimum output
  $p$-R\'{e}nyi entropies of channels are not generally additive for
  $p>1$, we demonstrate here by a careful random selection argument
  that also at $p=0$, and consequently for
  sufficiently small $p$, there exist counterexamples.

  An explicit construction of two channels from $4$ to $3$ dimensions
  is given, which have non-multiplicative minimum output rank;
  for this pair of channels, numerics strongly suggest that
  the $p$-R\'{e}nyi entropy is non-additive for all $p \lesssim 0.11$.
  We conjecture however that violations of additivity exist for all $p<1$.
\end{abstract}

\maketitle

\section{Introduction and Definitions}
\label{sec:intro}
For a quantum channel (i.e.~a completely positive and trace preserving
linear map) ${\cal N}$ between finite quantum systems, and $p \geq 0$,
define
\[
  S_p^{\min}({\cal N}) := \min_{\rho} \frac{1}{1-p}\log\tr({\cal N}(\rho))^p,
\]
where the minimisation is over all states (normalised density operators)
on the input space of ${\cal N}$.
The quantity $S_p(\sigma) = \frac{1}{1-p}\log\tr\sigma^p$ is known as
$p$-R\'{e}nyi entropy of the state $\sigma$ ($0<p<\infty$ and $p\neq 1$),
with the definition extended to $p=0,1,\infty$ by taking limits;
$S_1(\sigma) = S(\sigma) = -\tr\sigma\log\sigma$ is the von Neumann entropy.
$S_\infty(\sigma) = -\log \|\sigma\|_\infty$ is the min-entropy, and
$S_0(\sigma) = \log{\rm rank}\,\sigma$.
Due to the concavity of the R\'{e}nyi entropies in $\rho$, the
minimum in the above definition is attained at a pure input state
$\rho = \proj{\psi}$.

The additivity problem is the question whether for all channels ${\cal N}_1$
and ${\cal N}_2$, it holds that
\begin{equation}
  \label{eq:add}
  S_p^{\min}({\cal N}_1\ox{\cal N}_2) \stackrel{?}{=} S_p^{\min}({\cal N}_1) + S_p^{\min}({\cal N}_2).
\end{equation}
Note that the direction ``$\leq$'' here is trivial, so proofs and
counterexamples have to concentrate on the direction ``$\geq$''.
This was indeed proved for special channels and some $p$; for example,
it is known for $p \geq 1$ if one of the channels is
entanglement-breaking~\cite{shor,king1}, unital on a qubit
space~\cite{king2}, or depolarising of any dimension~\cite{king3};
in addition for a number of other cases.
King~\cite{king4} has furthermore shown that it holds for $p<1$
if one of the channels is entanglement-breaking.
Holevo and Werner~\cite{HW}
exhibited the first counterexamples to eq.~(\ref{eq:add}), for $p>4.79$.
It was demonstrated recently~\cite{winter-add,hayden-add} that
for every $p>1$ there exist channels violating eq.~(\ref{eq:add}).

Here we show that eq.~(\ref{eq:add}) is also false at $p=0$, and by
continuity of 
$S_p$ in $p$, it is thus violated for all $p \leq p_0$ with some small
but positive $p_0$. Since $S_0(\sigma)$ is the logarithm of the rank of the
density matrix $\sigma$,
so $S_0^{\min}({\cal N})$ is the logarithm of the minimum output rank of the channel,
i.e.~of the smallest rank of an output state. In the next Section we prove our
main existence result of counterexamples, in Section~\ref{sec:extension}
we exhibit an explicit example, and in Section~\ref{sec:larger-p} we explore
up to which $p<1$ we can violate additivity of $S_p^{\min}$.

\section{Main result}
\label{sec:main}

\begin{theorem}
\label{thm:main}
If $d_A>2$, $d_B > 2$ and $d_Ad_B$ is even then there exist
quantum channels ${\cal N}_1$, 
${\cal N}_2$ with $d_A$-dimensional input spaces and $d_B$-dimensional
output spaces, such that
\[
  S_0^{\min}({\cal N}_1) = S_0^{\min}({\cal N}_2) = \log d_B,
\]
but 
\[
  S_0^{\min}({\cal N}_1\ox{\cal N}_2) \leq \log(d_B^2-1) < 2\log d_B.
\]
\end{theorem}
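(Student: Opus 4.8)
The plan is to prove this by a random/counting argument, exploiting the duality between a channel's output and its Stinespring isometry. Recall that a channel $\cN$ with $d_A$-dimensional input and $d_B$-dimensional output, with environment of dimension $d_E$, is given by an isometry $V:\CC^{d_A}\to\CC^{d_B}\ox\CC^{d_E}$, and the output $\cN(\proj\psi)$ has rank equal to the Schmidt rank across the $B$:$E$ cut of the vector $V\ket\psi\in\CC^{d_B}\ox\CC^{d_E}$. Thus $S_0^{\min}(\cN)=\log d_B$ exactly when the image $W:=V(\CC^{d_A})\subseteq\CC^{d_B}\ox\CC^{d_E}$, viewed as a $d_A$-dimensional subspace, contains \emph{no} vector of $B$:$E$-Schmidt rank $<d_B$; equivalently $W$ avoids the variety $\Sigma_{<d_B}$ of matrices of rank $\le d_B-1$ in $\CC^{d_B}\ox\CC^{d_E}\cong\Hom{d_E}{d_B}$. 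I would choose $d_E$ so that this generic-avoidance is possible: the locus of rank-$\le(d_B-1)$ matrices has codimension $(d_B-(d_B-1))(d_E-(d_B-1))=d_E-d_B+1$ inside the matrix space, so a generic $d_A$-dimensional subspace misses it (away from $0$) provided $d_A-1 < d_E-d_B+1$, i.e.\ $d_E \ge d_A+d_B-1$. Take $d_E=d_A+d_B-1$ (or a convenient larger value).

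Next, for the tensor product: $\cN_1\ox\cN_2$ has Stinespring isometry $V_1\ox V_2$ acting on $\CC^{d_A}\ox\CC^{d_A}$, with output space $\CC^{d_B}\ox\CC^{d_B}$ and environment $\CC^{d_E}\ox\CC^{d_E}$. To get $S_0^{\min}(\cN_1\ox\cN_2)\le\log(d_B^2-1)$ I must exhibit a single input state $\ket\Phi\in\CC^{d_A}\ox\CC^{d_A}$ such that $(V_1\ox V_2)\ket\Phi$, regarded as an element of $(\CC^{d_B}\ox\CC^{d_B})\ox(\CC^{d_E}\ox\CC^{d_E})$, has Schmidt rank $\le d_B^2-1$ across the (combined $B$):(combined $E$) cut. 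The natural candidate, mirroring the constructions of \cite{winter-add,hayden-add}, is to pick $\cN_2$ essentially conjugate/transpose-related to $\cN_1$ and take $\ket\Phi$ to be (proportional to) a maximally entangled state $\sum_i\ket i\ket i$ on $\CC^{d_A}\ox\CC^{d_A}$; then $(V_1\ox V_2)\ket\Phi$ is, up to reshuffling tensor factors, proportional to a matrix built from $\tr$-type contractions of $V_1$ with $\bar V_1$, and one computes its rank directly. The point of the hypothesis that $d_Ad_B$ be even is presumably that it lets one pad dimensions / choose the pairing so that this contracted vector has a controllable, slightly-deficient rank (exactly $d_B^2-1$, one less than full), rather than full rank $d_B^2$; a parity obstruction of the symplectic/antisymmetric kind typically forces rank to drop by at least one.

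So the argument has three ingredients, carried out in this order: (i) show that for a \emph{generic} isometry $V_1$ with the chosen $d_E$, the image subspace avoids all low-Schmidt-rank vectors, so $S_0^{\min}(\cN_1)=\log d_B$ — this is a dimension count plus the fact that a generic subspace is transverse to a fixed subvariety; (ii) define $\cN_2$ from $\cN_1$ by an explicit operation (complex conjugation / transpose of $V_1$, possibly after embedding), check that it inherits $S_0^{\min}(\cN_2)=\log d_B$ — either because the defining operation preserves the avoidance property, or by a second independent genericity argument; (iii) evaluate the rank of the specific output $(V_1\ox V_2)\ket\Phi$ for $\ket\Phi$ maximally entangled and show it is $\le d_B^2-1$, using the parity/dimension hypotheses. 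The main obstacle is step (iii) combined with the need for steps (i) and (ii) to hold \emph{simultaneously} for the same $V_1$: I must show the ``bad'' set of isometries $V_1$ — those for which either $\cN_1$ or $\cN_2$ has deficient minimum output rank — is a proper subvariety (measure zero), so that a $V_1$ achieving (i), (ii) and (iii) exists. Handling the edge cases where the naive dimension count is tight (small $d_A,d_B$, and the role of the parity condition in making the count work out to give exactly $d_B^2-1$) will be the delicate part; I expect to need to choose $d_E$ slightly larger than the bare minimum and to argue the genericity statement carefully rather than just quoting ``generic subspaces are transverse.''
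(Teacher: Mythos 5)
Your step (i) is essentially sound and is the dual, Stinespring-side formulation of the paper's Lemma \ref{lemma:Pr-0}: requiring the image $V(\CC^{d_A})$ to avoid the rank-$\le d_B-1$ determinantal variety is equivalent to requiring the orthogonal complement of the Choi--Jamio\l{}kowski state's support to avoid the Segre variety of product vectors, and the generic-transversality dimension count is the right tool. But steps (ii) and (iii) contain a genuine gap: the mechanism you propose for forcing the two-copy output to be rank-deficient does not work. Taking $\cN_2$ to be the complex conjugate of $\cN_1$ and feeding in the maximally entangled state is the mechanism of the $p>1$ counterexamples, and what it produces is one unusually \emph{large} eigenvalue of $(\cN_1\ox\cN_2)(\Phi)$ (a large overlap with $\Phi_{BB'}$), not a vanishing one; for generic $V_1$ the output will still have full rank $d_B^2$. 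There is no ``symplectic/antisymmetric parity obstruction'' here, and your reading of the hypothesis that $d_Ad_B$ be even is incorrect: it is needed only so that $\CC^{d_A}\ox\CC^{d_B}$ splits into two orthogonal subspaces of equal dimension $d_Ad_B/2$.

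The paper's actual mechanism, which you would need to supply, is to relate the two channels through their (generalised) Choi--Jamio\l{}kowski states $\rho_{AB}$, $\sigma_{A'B'}$: take $\rho$ to be the normalised projector onto a uniformly random subspace $\Pi$ of dimension $d_Ad_B/2$ and $\sigma^\top$ the normalised projector onto its orthogonal complement. Then $\tr(\rho\,\sigma^\top)=0$ holds \emph{deterministically}, which says exactly that the product vector $\Phi_{AA'}\ox\Phi_{BB'}$ lies in the kernel of $\rho\ox\sigma$; translated back through the (generalised) Choi isomorphism, this exhibits an input state (a locally filtered maximally entangled state, namely the normalisation of $(\sqrt{\rho_A}\ox\sqrt{\sigma_{A'}})\ket{\Phi}_{AA'}$, not $\ket{\Phi}$ itself) whose output has a zero eigenvalue, giving $S_0^{\min}(\cN_1\ox\cN_2)\le\log(d_B^2-1)$. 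This construction also dissolves the simultaneity problem you flag at the end: the orthogonal complement of a uniformly random subspace is itself uniformly random, so both channels are individually generic with probability $1$ and the single-copy full-rank property holds for both at once by one application of the genericity lemma, with $d_E=d_Ad_B/2$ and the condition $d_Ad_B>d_A+d_B+d_E-2$ following from $(d_A-2)(d_B-2)>0$. Without replacing your step (iii) by something of this kind, the proof does not go through.
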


\begin{proof}
Our approach is the following: let $\rho_{AB} = (\id\ox{\cal N})\Psi$
be a Choi-Jamio\l{}kowski state of the channel ${\cal N}$, with a particular choice
of reference state $\ket{\Psi} \in A \ox A' = \CC^{d_A} \ox
\CC^{d_A}$. Note that while usually people 
use a fixed maximally entangled state, for the isomorphism it is
sufficient that it is of maximal Schmidt rank.
In \cite{hayden-add,winter-add}, additivity counterexamples were found
for $p>1$ by choosing $\cN$ randomly subject to a certain constraint.  Our
approach will be instead to choose the Choi-Jamio\l{}kowski state randomly,
again subject to a certain constraint that helps guarantee the
additivity counterexample.

First note that ${\cal N}(\varphi)$ has maximal rank $d_B$ for every
input state $\varphi$ 
iff the orthogonal complement of $\rho$ doesn't contain any product
vectors, i.e.~for all pure states $\ket{\varphi}\in A$, $\ket{\psi}\in B$,
\be
  \label{eq:main-insight}
  \tr\bigl( \rho_{AB} (\varphi\ox\psi) \bigr) \neq 0.
\ee
The easy justification of this is as follows: in Appendix A we show that
the action of the channel ${\cal N}$ can be written
\be
  \label{eq:c-j}
  {\cal N}(\varphi) = \tr_A \left[ \rho_{AB}
              \left( \rho_A^{-1/2}U^\dagger\overline{\varphi}U\rho_A^{-1/2} \ox \1 \right) \right],
\ee
where $\overline{\cdot}$ denotes the complex conjugate with respect to a 
fixed computational basis and $U$ is a unitary depending on $\Psi$
(see Appendix A for details). Full rank of the output means that for
all pure states $\varphi$, $\psi$,
\[\begin{split}
  0 &\neq \tr\bigl( {\cal N}(\varphi) \psi \bigr) \\
    &=    \tr \left[ \rho_{AB}
            \left( \rho_A^{-1/2}U^\dagger\overline{\varphi}U\rho_A^{-1/2} \ox \psi \right) \right] \\
    &\propto  \tr\bigl[ \rho_{AB} (\varphi'\ox\psi) \bigr],
\end{split}\]
where we used eq. \eq{c-j} and the fact that
$\rho_A^{-1/2}U^\dagger\ket{\overline{\varphi}}$ is, up to normalisation,
another pure state $\ket{\varphi'}$.
Note that \emph{any} unitary $U$ on $A$ will serve to create a channel,
so we shall fix it to be the identity from now on -- this is only a matter
of redefining $\Psi_{AA'}$, which we can do if only given $\rho_{AB}$.

So, our task is to find two states $\rho_{AB}$ and $\sigma_{A'B'}$
on $A \ox B$
with this property, such that $\omega_{AA'\,BB'} = \rho_{AB}\ox\sigma_{A'B'}$
\emph{does have} a product state in its orthogonal complement; we'll choose it
to be the maximally entangled state $\Phi_{AA'} \ox \Phi_{BB'}$.
Then the condition we seek to enforce is
\[
  0 = \tr\bigl( (\rho_{AB} \ox \sigma_{A'B'}) (\Phi_{AA'} \ox \Phi_{BB'}) \bigr)
    = \frac{1}{d_Ad_B} \tr (\rho\,\sigma^\top),
\]
where $\top$ signifies the matrix transpose.  Note that the channel
input will not be $\Phi_{AA'}$, but rather the normalised version of
$\left(\sqrt{\rho_A} \ot \sqrt{\sigma_{A'}}\right)\ket{\Phi}_{AA'}$.

What we will do is simply pick $\rho$ to be the (normalised)
projection onto a $d_Ad_B/2$-dimensional random subspace, drawn
according to the unitary invariant measure on $AB$,
and $\sigma^\top$ the (normalised) projection onto the orthogonal
complement of $\rho$:
\[
  \rho   = \frac{2}{d_Ad_B}\Pi,\quad
  \sigma = \frac{2}{d_Ad_B}(\1-\Pi^\top).
\]
This enforces the condition $\tr(\rho\,\sigma^\top) = 0$ deterministically,
while both the supporting subspaces of $\rho$ and $\sigma$ are
individually uniformly random. Thus we are done once we prove Lemma~\ref{lemma:Pr-0},
stated below, since it implies for large enough $d_A$ and $d_B$, that
with probability $1$ neither the orthogonal complement of $\rho$ nor
that of $\sigma$ (which are themselves uniformly random subspaces)
contains a product vector.
\end{proof}

\begin{lemma}
\label{lemma:Pr-0}
Let $\Pi$ be a uniformly random projector in $\CC^{d_A} \ox \CC^{d_B}$
of rank $d_E$ such that $d_Ad_B > d_A + d_B + d_E - 2$. Then,
\be
  \Pr_\Pi\left\{ \exists \varphi_A\in\CC^{d_A},\varphi_B\in\CC^{d_B}
 \tr\bigl( (\varphi_A\ox\varphi_B)\Pi
    \bigr) = 1 \right\} =  0.
\label{eq:conc-bound}
\ee
In words: the probability that a random subspace of ``small''
dimension contains a product state, is zero.
\end{lemma}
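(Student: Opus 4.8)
The plan is to translate the statement into projective algebraic geometry and dispose of it by a dimension count. Write $n = d_Ad_B$. An orthogonal projector $\Pi$ of rank $d_E$ is determined by its range, so the space of such $\Pi$ is the Grassmannian $G(d_E,n)$, a smooth compact irreducible variety of complex dimension $d_E(n-d_E)$, and the uniformly random $\Pi$ is distributed according to the unitary-invariant measure, which is absolutely continuous with respect to Lebesgue measure on this manifold. As observed right after the statement, $\tr\bigl((\varphi_A\ox\varphi_B)\Pi\bigr)=1$ for some unit vectors $\varphi_A,\varphi_B$ is equivalent to $\Pi$ having the product line $\CC(\ket{\varphi_A}\ox\ket{\varphi_B})$ in its range. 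Hence the event in \eqref{eq:conc-bound} is exactly ``$\mathrm{range}(\Pi)$ meets the Segre variety $\Sigma\subset\PP(\CC^{d_A}\ox\CC^{d_B})$'', where $\Sigma\cong\PP^{d_A-1}\times\PP^{d_B-1}$ has dimension $d_A+d_B-2$.

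Next I would introduce the incidence variety
\[
  Z \;=\; \bigl\{\, (v,W)\in\Sigma\times G(d_E,n) \;:\; v\subseteq W \,\bigr\},
\]
and compute its dimension through the first projection $\pi_1\colon Z\to\Sigma$. For a fixed line $v\in\Sigma$, the fibre $\pi_1^{-1}(v)$ is the set of $d_E$-dimensional subspaces of $\CC^n$ containing $v$, which is a copy of $G(d_E-1,n-1)$; in particular all fibres have the same dimension $(d_E-1)(n-d_E)$, $\pi_1$ is a fibre bundle, and $Z$ is irreducible of dimension
\[
  \dim Z \;=\; (d_A+d_B-2) + (d_E-1)(n-d_E).
\]

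The set of ``bad'' projectors is precisely the image of the second projection $\pi_2\colon Z\to G(d_E,n)$; since $\pi_2$ is a morphism of projective varieties this image is Zariski-closed, and its dimension is at most $\dim Z$. Comparing with $\dim G(d_E,n)=d_E(n-d_E)$ gives
\[
  \dim G(d_E,n) - \dim Z \;\ge\; (n-d_E) - (d_A+d_B-2) \;=\; d_Ad_B - d_A - d_B - d_E + 2,
\]
which is strictly positive exactly under the hypothesis $d_Ad_B > d_A+d_B+d_E-2$. Thus the bad set is a proper Zariski-closed subset of the irreducible variety $G(d_E,n)$, hence of strictly smaller real dimension, hence Lebesgue-null, hence null for the unitary-invariant measure; this is the claimed $\Pr_\Pi\{\cdots\}=0$.

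The only genuinely delicate points I expect are the dimension bookkeeping — checking that the fibres of $\pi_1$ really are all isomorphic to $G(d_E-1,n-1)$ (they are, since ``subspaces through a fixed line'' is a homogeneous condition) so that $\dim Z$ is computed without error, and that $\pi_2(Z)$ cannot fill more of $G(d_E,n)$ than $\dim Z$ allows (it cannot, as the image of a variety under a morphism has dimension at most that of the source). The passage from ``proper subvariety'' to ``measure zero'' is standard but worth spelling out, because the lemma is phrased probabilistically rather than geometrically; here it suffices that each irreducible component of $\pi_2(Z)$ has complex dimension $<d_E(n-d_E)$, so is a Lebesgue-null subset of $G(d_E,n)$, and a finite union of null sets is null.
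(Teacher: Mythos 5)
Your proof is correct and follows essentially the same route as the paper: reformulate the event as the range of $\Pi$ meeting the Segre variety, and conclude by a dimension count showing a generic $d_E$-dimensional subspace misses it, which is exactly where the hypothesis $d_Ad_B > d_A+d_B+d_E-2$ enters. The only difference is one of completeness: the paper states the Segre (determinantal) variety has affine dimension $d_A+d_B-1$ and then outsources the genericity claim to ``standard algebraic-geometric arguments'' with citations, whereas you actually carry out that standard argument via the incidence correspondence $Z\subset\Sigma\times G(d_E,n)$ and the fibre-dimension theorem, and your bookkeeping checks out.
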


Note that if $d_Ad_B$ is even
then $d_E=d_Ad_B/2$ is an integer, and so the inequality
$(d_A-2)(d_B-2)>0$ can be rearranged to obtain
$d_Ad_B > d_A + d_B + d_E - 2$, thus justifying the application to
Theorem~\ref{thm:main}.

\begin{proof}
Geometrically, we want to show that the probability for a random
subspace of dimension $d_E$ to contain a product state, is zero.
Using the isomorphism between bipartite vectors and $d_A\times d_B$-matrices
(which identifies Schmidt rank with matrix rank)~\cite{sub-schmidt},
we can reformulate the task as describing the $d_E$-dimensional
subspaces of $d_A \times d_B$-matrices not containing any
nonzero elements of rank $1$. In other words, we are interested in the
set of subspaces
intersecting the \emph{determinantal variety} of vanishing
$2\times 2$-minors only in the zero matrix. The
dimension of this variety -- known as the Segre embedding -- is
easily seen to be $d_A + d_B - 1$, so a generic subspace of
dimension $d_E \leq d_A d_B - (d_A + d_B - 1) = (d_A-1)(d_B-1)$
will not intersect it except trivially, by standard algebraic-geometric
arguments~\cite{Eisenbud,Landsberg}; a more explicit argument for
this fact was given recently by Walgate and Scott~\cite{WalgateScott}.
\end{proof}

\section{An explicit construction in small dimension}
\label{sec:extension}
Since our additivity violation takes the form of
only a single zero eigenvalue in the two-copy output, it is strongest
when the channel dimensions are smallest.  Indeed, violations for
large dimension can be constructed from channels from small dimension
by tensoring the channel with a trivial channel, such as a completely
depolarising channel.  Thus, we are most interested in finding
counterexamples with small dimension.

One such counterexample, with $d_A=4$ and $d_B=3$ is described here.
Based on the constructions in~\cite{sub-schmidt}, and indeed a slight variation
of it, we show now -- using the same methodology as above -- how to
construct two channels ${\cal N}_i:{\cal B}(\CC^4) \rightarrow {\cal B}(\CC^3)$
($i=1,2$) such that
\[
  S_0^{\min}({\cal N}_1) = S_0^{\min}({\cal N}_2) = \log 3,
  \ \text{ but }\ 
  S_0^{\min}({\cal N}_1 \ox {\cal N}_2) \leq \log 8 < 2\log 3.
\]
These happen to be the smallest dimensions that satisfy Lemma~\ref{lemma:Pr-0}.

As we have discussed above, we describe them via their Choi-Jamio\l{}kowski
states $\rho_{AB}$ and $\sigma_{AB}$ (with $A$ and $B$ being a $4$-
and $3$-dimensional system, respectively) such that $\tr\rho\sigma^\top = 0$
and neither $\rho$ nor $\sigma$ contains a product state in the
respective orthogonal complement of their supports.

Resorting to the supporting subspaces of $\rho$ and $\sigma^\top$,
denoted $R,S < A\ox B$, respectively,
we have nothing to do but choose them to be orthogonal and 
of dimension $6$, such that neither contains a product state.

Using the customary notation of vectors in $\CC^4 \times \CC^3$
as $3 \times 4$ matrices~\cite{sub-schmidt},
and with $\omega = e^{2\pi i/3}$, we let $R$ be spanned by
\[
  \left[\begin{array}{rrrr}
            &   &    &  \\
          1 &   &    &  \\
            & 1 & 0  & 0 
        \end{array}\right],\ 
  \left[\begin{array}{rrrr}
          1 &   & &  \\
            & 1 & &  \\
            &   & 1 & 0 
        \end{array}\right],\ 
  \left[\begin{array}{rrrr}
          1 &        & &  \\
            & \omega & &  \\
            &        & \omega^2 & 0 
        \end{array}\right],\ 
  \left[\begin{array}{rrrr}
          0 & 1 & &  \\
            &   & \omega^2 &  \\
            &   &          & \omega
        \end{array}\right],\ 
  \left[\begin{array}{rrrr}
          0 & 0 & 1  &    \\
            &   &    & -1 \\
            &   &    &  
        \end{array}\right], \text{ and }
  \left[\begin{array}{rrrr}
            & 0  & 0  & 1 \\
          0 &    &    &   \\
         -1 &    &    &  
        \end{array}\right];
\]
whereas $S$ is spanned by
\[
  \left[\begin{array}{rrrr}
            &    &    &  \\
          1 &    &    &  \\
            & -1 & 0  & 0 
        \end{array}\right],\ 
  \left[\begin{array}{rrrr}
          1 &   & &  \\
            & \omega^2 & &  \\
            &   & \omega & 0 
        \end{array}\right],\ 
  \left[\begin{array}{rrrr}
          0 & 1 &   &  \\
            &   & 1 &  \\
            &   &   & 1
        \end{array}\right],\ 
  \left[\begin{array}{rrrr}
          0 & 1 & &  \\
            &   & \omega &  \\
            &   &        & \omega^2
        \end{array}\right],\ 
  \left[\begin{array}{rrrr}
          0 & 0 & 1  &   \\
            &   &    & 1 \\
            &   &    &  
        \end{array}\right], \text{ and }
  \left[\begin{array}{rrrr}
            & 0  & 0  & 1 \\
          0 &    &    &   \\
          1 &    &    &  
        \end{array}\right].
\]

Since these twelve vectors are clearly orthogonal, the subspaces $R$ and $S$
are each of dimension $6$, and orthogonal to each other; the proof that they don't
contain a product state is as follows: the first five vectors of $R$ and $S$
span respective $5$-dimensional subspaces $R_0$ and $S_0$. Notice that
they are entirely symmetric to each other, and that they don't contain product
states by the arguments of~\cite{sub-schmidt}. Also, the sixth vector is clearly
not product in either case. Hence, to obtain a product vector in $R$, say (the 
argument for $S$ is very similar),
we need to form the sum of the sixth vector with an element from $R_0$:
\[\begin{split}
  M &= \a
       \left[\begin{array}{rrrr}
              &   &    &  \\
            1 &   &    &  \\
              & 1 & 0  & 0 
       \end{array}\right]
      +\b
       \left[\begin{array}{rrrr}
            1 &   & &  \\
              & 1 & &  \\
              &   & 1 & 0 
       \end{array}\right]
      +\g
       \left[\begin{array}{rrrr}
          1 &        & &  \\
            & \omega & &  \\
            &        & \omega^2 & 0 
       \end{array}\right]
      +\d
       \left[\begin{array}{rrrr}
          0 & 1 & &  \\
            &   & \omega^2 &  \\
            &   &          & \omega
       \end{array}\right]
      +\ve
       \left[\begin{array}{rrrr}
          0 & 0 & 1  &    \\
            &   &    & -1 \\
            &   &    &  
       \end{array}\right]                \\
     &\phantom{===========================================}  
      +\left[\begin{array}{rrrr}
            & 0  & 0  & 1 \\
          0 &    &    &   \\
         -1 &    &    &  
       \end{array}\right]                \\
   &= \left[\begin{array}{rrrr}
         \b+\g & \d          & \ve           &    1 \\
         \a    & \b+\omega\g & \omega^2\d    & -\ve \\
         -1    & \a          & \b+\omega^2\g & \omega\d
       \end{array}\right]
\end{split}\]
For this to be a product vector, all its $2\times 2$-minors have to vanish,
but we need to look at only a few to obtain a contradiction:
the minors $\{1,2\}\times\{3,4\}$, $\{1,3\}\times\{2,4\}$ and
$\{2,3\}\times\{1,4\}$ imply
\(
  0 = -\ve^2 - \omega^2\d
    = \omega\d^2 - \a
    = \omega\a\d - \ve
\),
which in turn allow us to express all other variables in terms of $\ve$:
\[
  \d = -\omega\ve^2,\quad
  \a =  \omega\d^2 = \ve^4,\quad 
  \ve = \omega\a\d = -\omega^2 \ve^6,
\]
leaving for $\ve$ only the possibilities of being $0$ or a fifth root
of $-\omega^2$. If $\ve=0$, so are $\a$ and $\d$, and in this case
the $\{1,3\}\times\{1,3\}$-minor is non-vanishing. Hence we continue with
$\ve^5 = -\omega^2$, and look at the minors $\{1,3\}\times\{1,4\}$, 
$\{1,2\}\times\{2,4\}$ and $\{1,3\}\times\{3,4\}$:
these yield the constraints
\[
  0 = (\b+\g)\omega\d + 1
    = -\d\ve - (\b+\omega\g)
    = \omega\d\ve - (\b+\omega^2\g),
\]
in other words
\[
  \b+\g         = -\omega^2/\d = \omega/\ve^2 = -\omega^2\ve^3,\quad
  \b+\omega\g   = -\d\ve = \omega\ve^3,\quad
  \b+\omega^2\g = \omega\d\ve = -\omega^2\ve^3,
\]
which implies $\g=0$ and $\b = -\omega^2\ve^3$ from the 1st and 3rd
equation, but then the 2nd contradicts by demanding $\b = \omega\ve^3$.

Hence, in conclusion, $R$ cannot contain a product state, and the
argument for $S$ is similar in nature.
\qed

\section{Larger R\'{e}nyi parameter}
\label{sec:larger-p}
Now we can use the explicit pair of channels constructed in the
previous section to look for larger values of $p$ for which 
additivity of $S_p^{\min}$ is violated.
The simplest thing is to take the Choi-Jamio\l{}kowski states
to be the normalised projections onto the subspaces $R$ and $S$,
respectively. However, we may clearly take \emph{any} state of
rank $6$ supported on the respective subspace to obtain a bona fide
generalised Choi-Jamio\l{}kowski state.
We performed some numerics in both cases: for the first (Choi-Jamio\l{}kowski
states proportional to the subspace projections),
using $S\bigl(({\cal N}\ox{\cal N}')\Phi_3\bigr)$ 
as an upper bound of $S_p^{\min}({\cal N}\ox{\cal N}')$
and numerical calculations of $S_p^{\min}({\cal N})$ and
$S_p^{\min}({\cal N}')$, we see violations of additivity for values
of $p$ up to $\approx 0.096$.

\begin{figure}[ht]
  \includegraphics[width=8cm,viewport=50 50 600 450,clip]{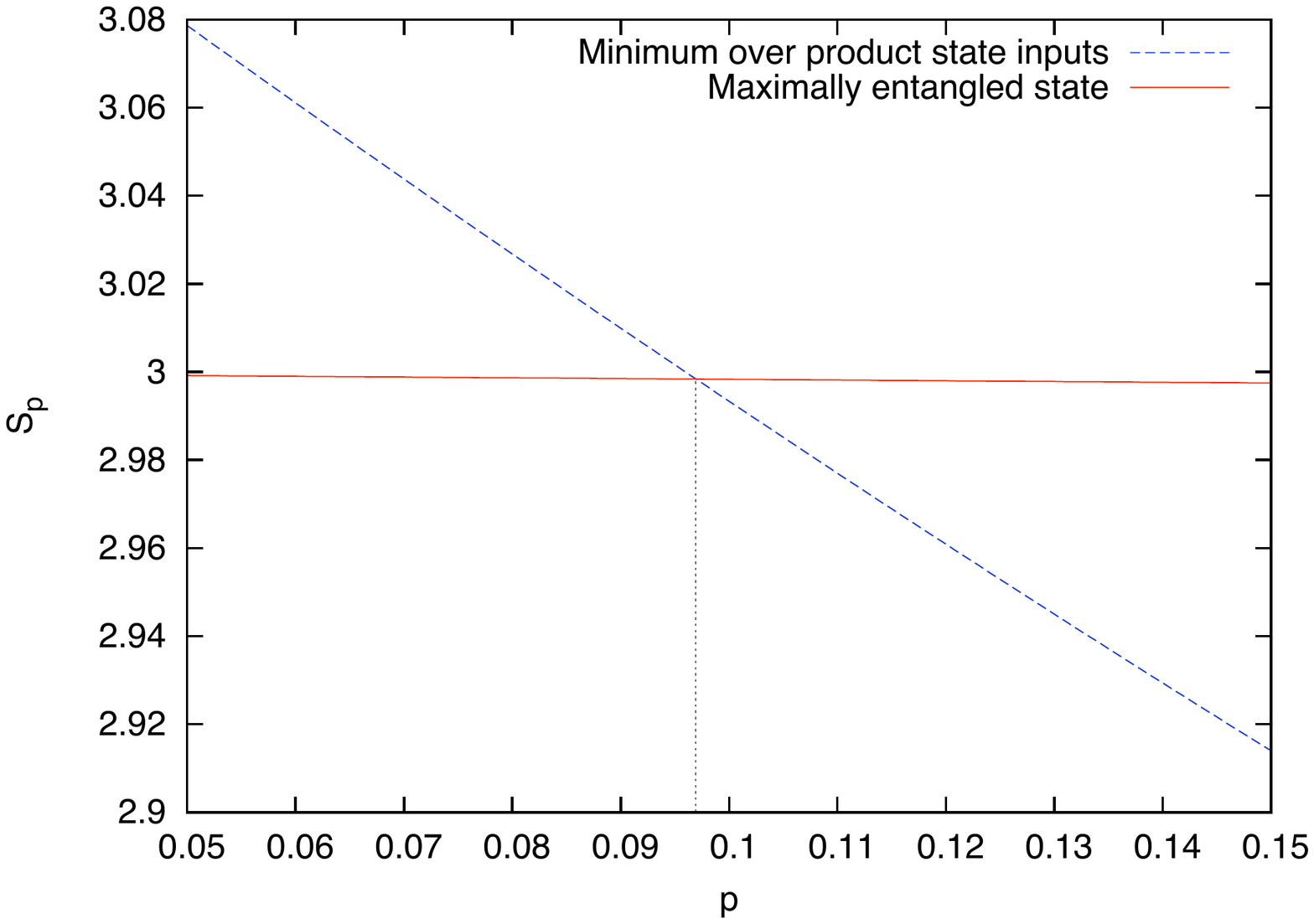}
  \includegraphics[width=8cm,viewport=50 50 600 450,clip]{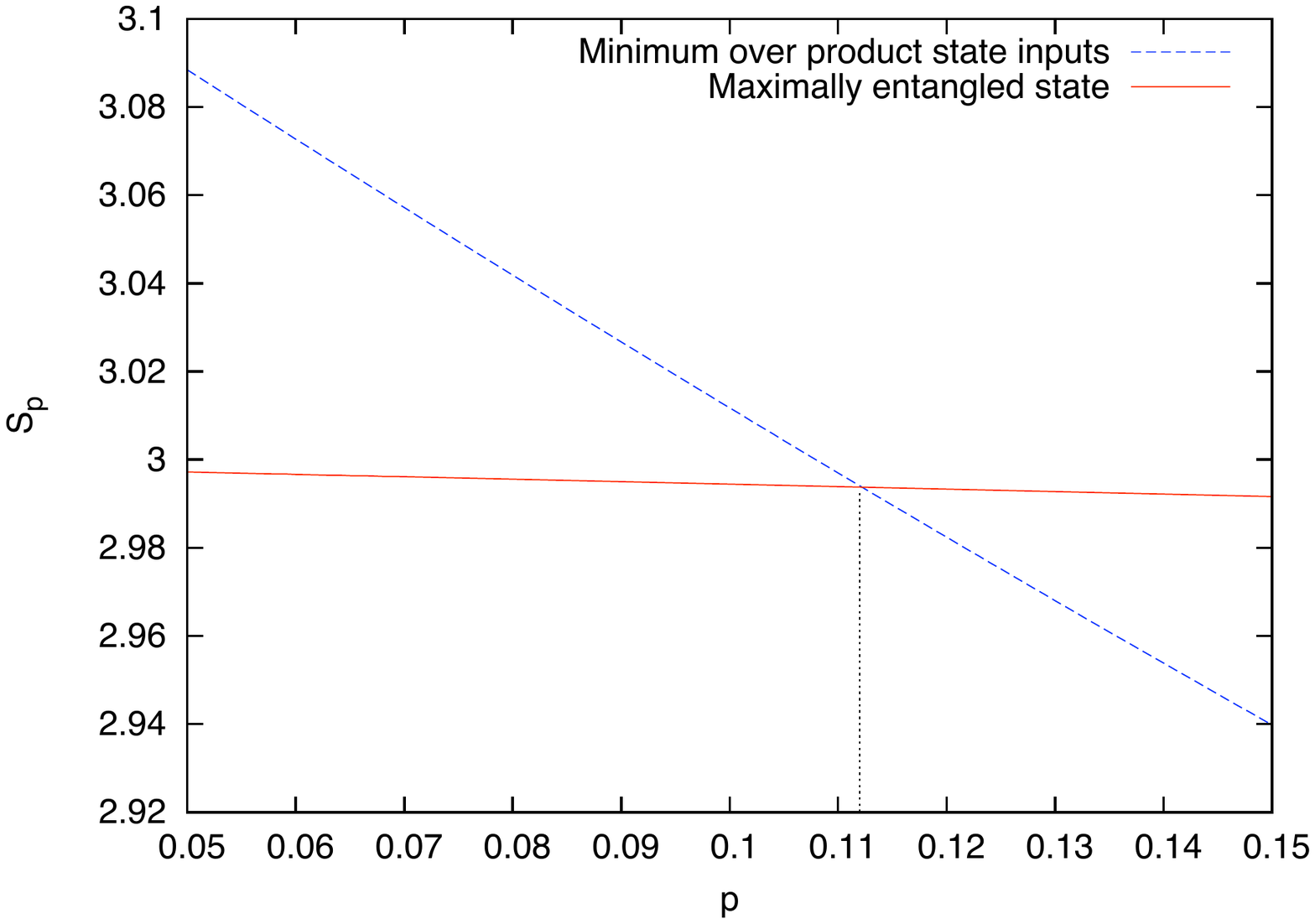}
  \caption{Plots of the output entropy of the tensor product channel with the
    input state corresponding to the maximally entangled state (red line, shallow slope),
    versus the numerically obtained minimum when
    restricted to tensor product input states (blue line, steep slope).
    On the left the Choi-Jamio\l{}kowski states are simply the normalised
    projections of the subspaces $R$ and $S$; on the right, one choses
    appropriately weighted density operators with support $R$ and $S$,
    respectively.}
\end{figure}

For the second, it turns out that a very good choice is 
to have $\rho_{AB}$ and $\sigma_{A'B'}$ to be diagonal in
the above bases of $R$ and $S$, respectively, with specific
probability weights obtained by another numerical search.
The weights of the basis vectors of $R$ and $S$, in the above order,
are
\[\begin{array}{llllll}
    0.172776, & 0.118738, & 0.199229, & 0.136705, & 0.306899, & 0.0656529,\ \text{ and} \\
    0.344911, & 0.124908, & 0.120721, & 0.156968, & 0.162754, & 0.089738,
\end{array}\]
respectively.
This results in a numerical violation of additivity for $p$ up to
$\approx 0.112$. We see no reason to believe that this value
should be the limit of additivity violations.

\medskip
To obtain a rigorous interval $[0;p_0]$ of violations of additivity,
we turn to the ideas of measure concentration explored in~\cite{aspects}
in the context of quantum information theory. We will not make everything
explicit, but the idea is as follows: we need to put rather tight lower
bounds on $S_p^{\min}$ of the two individual channels; in fact, each of the two
channels ${\cal N}$, ${\cal N}'$ is individually random from the class
of channels with Stinespring dilations $A \hookrightarrow B \ox \CC^{d_Ad_B/2}$
(random meaning: according to the unitary invariant measure on
$B \ox \CC^{d_Ad_B/2}$). For the output properties of each of the channels,
only the embedded $d_A$-dimensional subspace $S,\,S' < B \ox \CC^{d_Ad_B/2}$ is
relevant, which is a random subspace in the same sense~\cite{aspects}.

Now in~\cite{aspects}, Lemmas III.4 and III.6, it is shown that the spectrum
of all states in a random subspace $S < B \ox \CC^{d_Ad_B/2}$ is
tightly concentrated around the value $1/d_B$, for large enough
dimensions $d_A$ and $d_B$ such that $d_A \gg d_B \geq \Omega(\log d_A)$.
I.e., with high probability
the minimum Schmidt coefficient of any state in $S$, $S'$ is, say,
$\geq \frac{1}{2}\frac{1}{d_B}$. In other words, the output states of the
channels have spectrum bounded away from $0$ by this amount.
Then for $0\leq p <1$, clearly,
\[\begin{split}
  S_p^{\min}({\cal N}),\ S_p^{\min}({\cal N}')
             &> \frac{1}{1-p} \log \left( d_B \left(\frac{1}{2d_B}\right)^p \right) \\
             &= \log d_B + \frac{p}{1-p}\log\frac{1}{2}
              = \log d_B - \frac{p}{1-p}.
\end{split}\]
However,
\[
  S_p^{\min}\bigl( {\cal N}\ox{\cal N}' \bigr)
      \leq S_0^{\min}\bigl( {\cal N}\ox{\cal N}' \bigr)
      \leq \log\bigl( d_B^2 - 1 \bigr)
      =    2\log d_B + \log\left( 1-\frac{1}{d_B^2} \right).
\]
In conclusion, a violation is obtained as soon as
\[
  \frac{2p}{1-p} \leq -\log\left( 1-\frac{1}{d_B^2} \right),
\]
which follows if $p \leq \frac{1}{1+ 2\ln 2 d_B^2}$. We omit here any estimate
of the $d_B$ required in the above concentration of reduced state spectrum,
which depends on the exact constants one uses in the probability bounds,
but it is possible to get $p_0$ in the range of $10^{-3}$ to $10^{-2}$ 
by this approach.

There is yet another way to get rigorous estimates of $p_0$ for every
example, like for the explicit construction in the previous section.
Namely, get a lower bound on the minimum minimal eigenvalue of an output state of the
single copy channel ${\cal N}$, which can be relaxed to a convex optimisation
problem, and then use the argument above.

In detail, consider the usual Choi-Jamio\l{}kowski operator of the channel,
$\Omega_{AB} = (\id\ox{\cal N})\Gamma$, with $\ket{\Gamma} = \sum_{i=1}^{d_A} \ket{i}\ket{i}$.
Then,
${\cal N}(\varphi) = \tr_A \bigl[ \Omega_{AB} (\overline{\varphi}\ox\1) \bigr]$
(see Appendix A), and
\[\begin{split}
  \min_{\varphi} \lambda_{\min}\bigl( {\cal N}(\varphi) \bigr)
           =    \min_{\varphi,\psi} \tr\bigl[ \Omega_{AB} (\varphi\ox\psi) \bigr] 
          &=    \min_{\rho\text{ separable}} \tr\bigl[ \Omega_{AB} \rho \bigr]    \\
          &\geq \min_{\rho\text{ PPT}} \tr\bigl[ \Omega_{AB} \rho \bigr].
\end{split}\]
The latter is a semidefinite program, so duality theory will yield rigorous
lower bounds on the minimum minimal eigenvalue of an output state.
Doing that for our example in Section~\ref{sec:extension}, yields again
a rather poor bound for $p_0$ of the order $10^{-2}$.

\section{Discussion}
\label{sec:discussion}
After the disproof of the additivity conjecture for $S_p^{\min}$
at $p>1$, and the close shave by which the original and main conjecture
at $p=1$ has escaped, some hope was raised that one could prove
additivity for $p<1$, and hence by taking the limit for $p=1$.
This suggestion didn't seem so unreasonable after King~\cite{king4}
showed additivity if one of the channels is entanglement-breaking.
Also, it can be seen quite easily that arbitrary
numbers of copies of the Holevo-Werner channel~\cite{HW} obey
additivity for $p \leq 1$, via the result of~\cite{AlickiFannes}.
In this respect the log of the minimum
output rank, $S_0^{\min}$, took prominence as an important test
case, and the finding of a counterexample here is putting into doubt
possible programmes to prove the ``standard additivity conjectures''
by approaching $p=1$ from below.

We feel that, with the minimum output rank not multiplicative, it is
rather unlikely that any of the $S_p^{\min}$ for $p<1$ should be additive.
It is to be noted however, that the present technique doesn't really
yield massive violations of additivity, even at $p=0$, and presumably
less so at other $0<p<1$. This is in contrast to what one observes
at $p>1$~\cite{winter-add,hayden-add}, but it can be understood pretty
well in terms of control by the random selection to engineer a certain
conspiracy between the two channels: while for $p>1$ we only
need to fix one large eigenvalue of the two-copy output corresponding to the
maximally entangled input state, at $p<1$
(and most extremely so at $p=0$) \emph{all} non-zero eigenvalues
are relevant, and even to make $d$ of them zero exhausts the
possibilities of the random selection performing well on the
single-copy level. It is amusing to note, however, that we still exploit
the peculiar symmetries, and indeed the multiplicativity, of the maximally
entangled state to construct a violation.

It is our hope that the present work will spark the search for further 
counterexamples, potentially finding a unified principle behind the
constructions for $p>1$ and $p<1$ -- and eventually helping to decide
the original additivity conjecture(s) at $p=1$. Note that the construction
presented here and in~\cite{winter-add,hayden-add} share already a couple
of important traits.
First, the candidate channels are individually random
from the unitary invariant ensemble of Stinespring dilations with fixed
input, output and environment dimensions -- to get strong lower bounds
on the minimum output entropy. Second, the pair of channels
is chosen to be in some fixed relation to each other, so as to make
the output state corresponding to the maximally entangled input (or, in our
case, something very close to it) special; for $p>1$ we want it to have an
unusually large eigenvalue (which is why we choose the channels to be
complex conjugate to each other), here we want an eigenvalue to vanish
(which is why we impose orthogonality on the Choi-Jamio\l{}kowski states).
The possible extension or unification of the constructions thus is
not so much how they are individually selected, but has
to address the way the two channels are related to each other.

\bigskip
{\bf Note added.} After this work was presented at the AQIS'07 
workshop in Kyoto (September 2007), Duan and Shi~\cite{DuanShi} used
the methodology of our explicit construction in their surprising
results on quantum zero-error capacity; they also exhibit a single
channel from $4$ to $4$ dimensions violating additivity of $S_0^{\min}$
-- as opposed to our using the tensor product of two different channels --
in the sense that
$S_0^{\min}\bigl( {\cal N}^{\ox 2} \bigr) < 2 S_0^{\min}({\cal N})$.

\acknowledgments
The authors thank Patrick Hayden, Richard Low, Koenraad Audenaert,
Runyao Duan and Yaoyun Shi
for their interest in the present work, and encouraging as
well as interesting discussions.

AWH and AW acknowledge support by the European Commission under a 
Marie Curie Fellowship (ASTQIT, FP-022194).
TC, AWH, AM and AW acknowledge support through the
integrated EC project ``QAP'' (contract no.~IST-2005-15848),
as well as by the U.K.~EPSRC, project ``QIP IRC''.
AH was furthermore supported by the Army Research Office under grant W9111NF-05-1-0294.
AM was additionally supported by the EC-FP6-STREP network QICS.
DL thanks NSERC, CRC, CFI, ORF, MITACS, ARO, and CIFAR for support.
AW furthermore acknowledges support through an Advanced Research
Fellowship of the U.K.~EPSRC and a Royal Society Wolfson Research
Merit Award.


\appendix

\section{Choi-Jamiolkowski states}
Here we give a detailed explanation of eq. (\ref{eq:c-j}) by
describing how the channel can be recovered from our non-standard
Choi-Jamio\l{}kowski operator.

Recall how to reconstruct the channel from the ``standard''
Choi-Jamio\l{}kowski operator
$\Omega_{AB} = (\id\ox{\cal N})\Gamma$, where
$\ket{\Gamma} = \sum_{i=1}^{d_A} \ket{i}\ket{i}$. The key is the identity
\[
  {\cal N}(\varphi) = \tr_A \bigl[ \Omega_{AB} \bigl( \overline{\varphi} \ox \1 \bigr)  \bigr],
\]
with the complex conjugation with respect to the basis $\{\ket{i}\}_{i=1}^{d_A}$
denoted by $\overline{\cdot}$.

Now, if we have any entangled state $\ket{\Psi}$ of maximal Schmidt rank, it
has a Schmidt form
\[
  \ket{\Psi} = \sum_{i=1}^{d_A} \sqrt{\lambda_i} \ket{e_i}_A \ket{f_i}_{A'},
\]
with local bases $\{\ket{e_i}_A\}_{i=1}^{d_A}$ and $\{\ket{f_i}_{A'}\}_{i=1}^{d_A}$,
and strictly positive Schmidt coefficients $\lambda_i > 0$.
This means that $\Psi_A = \tr_{A'} \Psi_{AA'} = \sum_{i=1}^{d_A}
 \lambda_i \proj{e_i}$ has
full rank (in particular it is invertible), so its inverse is well-defined, and
\[
  \left( \Psi_A^{-1/2}\ox \1 \right)\ket{\Psi}_{AA'} = \sum_{i=1}^{d_A}
  \ket{e_i}\ket{f_i}. 
\]
Thus, introducing the unitary basis change
$U: \ket{e_i} \mapsto \ket{\overline{f_i}}$, we finally get
\[
  \left( U\Psi_A^{-1/2}\ox \1 \right)\ket{\Psi}_{AA'} = \sum_{i=1}^{d_A}
 \ket{\overline{f_i}}\ket{f_i}
  = \sum_{i=1}^{d_A} \ket{i}\ket{i} = \ket{\Gamma},
\]
due to the $V\ox\overline{V}$-invariance of $\ket{\Gamma}$ for unitary $V$.

So, since the mapping from $\Psi$ to $\Gamma$ only acts on $A$
while the Choi-Jamio\l{}kowski mapping acts only on $B$, and using the
fact that $\rho_A = \Psi_A$ for the generalised Choi-Jamio\l{}kowski
state $\rho_{AB} = (\id\ox{\cal N})\Psi_{AA'}$, we finally find
that we can recover the ``standard'' operator $\Omega_{AB}$ as
\[
  \Omega_{AB} = \left( U\Psi_A^{-1/2}\ox \1 \right) \rho_{AB} \left( \Psi_A^{-1/2}U^\dagger\ox \1 \right).
\]
In other words, using the above identities, the channel can be written
\be\begin{split}
  \label{eq:gen-CJ-mapping}
  {\cal N}(\varphi) &= \tr_A \bigl[ \Omega_{AB} \bigl( \overline{\varphi} \ox \1 \bigr)  \bigr] \\
                    &= \tr_A \left[ \left( U\Psi_A^{-1/2}\ox \1 \right) 
                                              \rho_{AB}
                                    \left( \Psi_A^{-1/2}U^\dagger\ox \1 \right)
                                              \bigl( \overline{\varphi} \ox \1 \bigr) \right] \\
                    &= \tr_A \left[ \rho_{AB}
               \left( \rho_A^{-1/2}U^\dagger\overline{\varphi}U\rho_A^{-1/2} \ox \1 \right) \right],
\end{split}\ee
which is eq.~(\ref{eq:c-j}) needed in the proof of Theorem~\ref{thm:main}.
\qed

Different $U$ correspond to choosing different initial reference states $\Psi$
with the same Schmidt spectrum,
with respect to which to formulate the Choi-Jamio\l{}kowski isomorphism. 
Since in our random selection argument we don't mention $\Psi$ to begin with, we
are free to put the unitary to $U=\1$.


\begin{thebibliography}{99}
  \bibitem{shor} P. W. Shor, ``Additivity of the classical capacity of entanglement-breaking
    quantum channels'', J. Math. Phys. {\bf 43}:4334-4340 (2002); arXiv:quant-ph/0201149. 

  \bibitem{king1} C. King, ``Maximization of capacity and p-norms for some product channels'',
    arXiv:quant-ph/0103086 (2001). 

  \bibitem{king2} C. King, ``Additivity for unital qubit channels'',
    J. Math. Phys. {\bf 43}:4641-4653 (2002); arXiv:quant-ph/0103156. 

  \bibitem{king3} C. King, ``The capacity of the quantum depolarizing channel'',
    IEEE Trans. Inf. Theory {\bf 49}:221-229 (2003); arXiv:quant-ph/0204172. 

  \bibitem{king4} C. King, announced at the 1st joint AMS-PTM meeting,
    Warsaw 31 July -- 3 Aug 2007.

  \bibitem{HW} A. S. Holevo, R. F. Werner, ``Counterexample to an
    additivity conjecture for output purity of quantum channels'',
    J. Math. Phys. {\bf 43}:4353Ð4357 (2002); arXiv:quant-ph/0203003.

  \bibitem{winter-add} A. Winter, ``The maximum output p-norm of
    quantum channels is not multiplicative for any p$>$2'',
    arXiv:0707.0402[quant-ph] (2007).

  \bibitem{hayden-add} P. Hayden, ``The maximal p-norm
    multiplicativity conjecture is false'', arXiv:0707.3291[quant-ph]
    (2007).

  \bibitem{sub-schmidt} T. Cubitt, A. Montanaro, A. Winter, ``On the
    dimension of subspaces with bounded Schmidt rank'', to appear
    in J. Math. Phys; arXiv:0706.0705[quant-ph] (2007).

  \bibitem{Eisenbud} D. Eisenbud ``Linear Sections of Determinantal Varieties'',
    Amer. J. Math. {\bf 110}(3):541-575 (1988).
  
  \bibitem{Landsberg} B. Ilic, J. M. Landsberg, ``On symmetric degeneracy loci, 
    spaces of symmetric matrices of constant rank and dual varieties'',
    Math. Ann. {\bf 314}:159Ð174 (1999).
  
  \bibitem{WalgateScott} J. Walgate, A. J. Scott, ``Generic local
    distinguishability and completely entangled subspaces'',
    arXiv:0709.4238[quant-ph] (2007).

  \bibitem{aspects} P. Hayden, D. Leung, A. Winter, ``Aspects of generic entanglement'',
    Comm. Math. Phys. {\bf 265}:95-117 (2006); arXiv:quant-ph/0407049.

  \bibitem{AlickiFannes} R. Alicki, M. Fannes, ``Note on Multiple
    Additivity of Minimal R\'{e}nyi Entropy Output of the
    Werner-Holevo Channels'', Open Systems Inf. Dyn {\bf
      11}(4):339-342 (2004); arXiv:quant-ph/0407033.

  \bibitem{DuanShi} R. Y. Duan, Y. Shi, ``Entanglement between Two Uses of a Noisy 
    Multipartite Quantum Channel Enables Perfect Transmission of Classical Information'',
    arXiv:0712.3700[quant-ph] (2007).

\end{thebibliography}
\end{document}